\theoremstyle{thmstyleone}%
\newtheorem{theorem}{Theorem}
\theoremstyle{thmstyletwo}%
\theoremstyle{thmstylethree}%
\newtheorem{definition}{Definition}%
\def\eqref#1{equation~\ref{#1}}
\def\1{\bm{1}}
\def\vp{{\bm{p}}}
\def\vq{{\bm{q}}}
\DeclareMathAlphabet{\mathsfit}{\encodingdefault}{\sfdefault}{m}{sl}
\SetMathAlphabet{\mathsfit}{bold}{\encodingdefault}{\sfdefault}{bx}{n}
\begin{document}

\title[Article Title]{Signal Observation Models and Historical Information Integration in Poker Hand Abstraction}


\author[1,2]{\fnm{Yanchang} \sur{Fu}}\email{fuyanchang2020@ia.ac.cn}
\author[2]{\fnm{Pei} \sur{Xu}}\email{xupei2018@ia.ac.cn}
\author[3]{\fnm{Dongdong} \sur{Bai}}\email{baidongdong@nudt.edu.cn}
\author[1,2]{\fnm{Lingyun} \sur{Zhao}}\email{zhaolingyun2021@ia.ac.cn}
\author*[2]{\fnm{Kaiqi} \sur{Huang}}\email{kqhuang@nlpr.ia.ac.cn}

\affil*[1]{\orgdiv{School of Artificial Intelligence}, \orgname{University of Chinese Academy of Sciences}}

\affil[2]{\orgdiv{Center for Research on Intelligent System and Engineering}, \orgname{Institute of Automation, Chinese Academy of Sciences}}

\affil[3]{\orgname{China RongTong Artificial Intelligence Research Center}}







\abstract{Hand abstraction has been instrumental in developing powerful AI for Texas Hold'em poker, a widely studied testbed for imperfect information games (IIGs). Despite its success, the hand abstraction task lacks robust theoretical tools, limiting both algorithmic innovation and theoretical progress. To address this, we extend the IIG framework with the \textbf{signal observation ordered game} model and introduce \textbf{signal observation abstraction} to formalize the hand abstraction task. We further propose a novel evaluation metric, the \textbf{resolution bound}, to assess the performance of signal observation abstraction algorithms. Using this metric, we uncover critical limitations in current state-of-the-art algorithms, particularly the significant information loss caused by the enforced omission of historical information. To resolve these issues, we present the \textbf{KrwEmd} algorithm, which effectively incorporates historical information into the abstraction process. Experiments in the Numeral211 hold'em environment demonstrate that KrwEmd addresses these limitations and significantly outperforms existing algorithms.}

\keywords{hand abstraction, imperfect information games, signal observation ordered games, resolution bound, KrwEmd}



\maketitle

\section{Introduction} \label{sec:intro}

In recent years, AI in hold'em games has achieved groundbreaking milestones, marked by the success of agents such as DeepStack, Libratus, and Pluribus~\citep{moravvcik2017deepstack, brown2018superhuman, brown2019superhuman}. Each of these AI systems has managed to defeat top human professionals. These successes demonstrate the potential of artificial intelligence in mastering complex strategic reasoning and establish hold'em games as a premier testbed for research in \textbf{imperfect information games (IIGs)}~\citep{sandholm2010state}.

Hold'em games, particularly \textbf{heads-up no-limit hold’em (HUNL)}, have long been recognized as a challenging benchmark for IIGs. Unlike \textbf{perfect information games (PIGs)} such as chess or Go~\citep{campbell2002deep, silver2016mastering}, hold'em games require players to make decisions under uncertainty due to incomplete knowledge of their opponents' private information. This challenge of reasoning under hidden information and stochasticity mirrors core problems in real-world decision-making scenarios.

A class of highly effective methods for addressing IIGs is based on the \textbf{counterfactual regret minimization (CFR)} algorithm and its numerous extensions~\citep{zinkevich2007regret, lanctot2009monte, tammelin2014solving, brown2019solving}. CFR iteratively adjusts strategies to minimize regret, eventually converging toward an approximate Nash equilibrium. However, while CFR and its variants are computationally feasible for smaller-scale games, they face substantial challenges when applied to human-scale hold'em games. To provide perspective, using CFR to solve the full \textbf{heads-up limit hold’em (HULH)} variant, a simplified two-player variant of HUNL, requires handling an enormous state space exceeding 1.2 petabyte (PB) of memory~\cite{johanson2013measuring}. For larger, more complex hold'em variants, direct CFR computation is infeasible due to constraints on both computational power and storage capacity.

To address the prohibitive computational demands of large-scale hold'em games, the research community has developed the \textbf{abstraction-solving-translation} paradigm, which has become a cornerstone of modern hold'em AI. This paradigm involves three distinct phases:

\begin{enumerate}[label=\textbf{\arabic*.}, leftmargin=*] 
    \item \textbf{Abstraction}: Reducing the complexity of the game by grouping similar actions or states into abstract representations. Abstractions reduce the size of the game tree, making it more computationally tractable.
    
    \item \textbf{Solving}: Applying algorithms like CFR to compute an optimal strategy for the abstracted game, which is computationally manageable.
    
    \item \textbf{Translation}: Mapping the strategy from the abstracted game back to the original, larger game, allows the AI to play the full game effectively.
\end{enumerate}

Within this paradigm, research has primarily focused on two major types of abstraction. The first is \textbf{action abstraction}, which simplifies the action space, such as bet sizes, by clustering possible actions into discrete categories. The second type is \textbf{hand abstraction}, which involves grouping similar private hand holdings into equivalence classes. For instance, hands with similar win probabilities or comparable structures, such as similar card rankings and suits, might be clustered into the same abstracted category.


    

This paper focuses on the modeling and algorithmic optimization of the hand abstraction task. Current research on hand abstraction lacks theoretical analysis tools, primarily because hold'em games are modeled as IIGs, where hand abstraction falls outside the scope of the framework—unlike action abstraction, which is a well-defined concept in IIGs. Specifically, in HULH, action abstraction groups action sequences (e.g., \texttt{/dbc/db} for deal, bet, call, deal, bet, and \texttt{/dbc/dk} for deal, bet, call, deal, check) into equivalent categories, each representable as an action sequence in IIGs. In contrast, hand abstraction groups hands (e.g., \texttt{[A$\heartsuit$K$\heartsuit \mid$ 1$\heartsuit$2$\heartsuit$3$\heartsuit$]} and \texttt{[A$\heartsuit$Q$\heartsuit \mid$ 1$\heartsuit$2$\heartsuit$3$\heartsuit$]}) into a single category. These hands cannot be directly modeled in IIGs, where the closest concept to a hand is an infoset (e.g., information set). Yet, a hand alone does not qualify as an infoset. Rather, an infoset is represented by a tuple, such as \texttt{([A$\heartsuit$K$\heartsuit \mid$ 1$\heartsuit$2$\heartsuit$3$\heartsuit$], /dbc/db)}.

The concept of \textbf{games with ordered signals}~\citep{gilpin2007lossless}, a subset of IIGs, introduces a formal framework for hand abstraction to construct a lossless abstraction based on hand isomorphisms. However, this approach relies on a complex forest structure and is confined to card-drawing models without replacement, limiting its generalizability to broader classes of games. Furthermore, this framework was primarily designed for constructing the \textbf{lossless isomorphism (LI)} algorithm, which suffers from extremely low compression rates. Since LI identifies hand equivalence classes by analyzing the structure of hands themselves, it cannot quantify differences between hands, preventing further clustering. This limitation makes it less suitable for constructing high-level AI, which has led to a gradual shift away from both the LI algorithm and the games with ordered signals framework. In practice, \textbf{outcome-based abstraction}, a lossy abstraction methodology, has gained prominence as a more effective alternative. This approach determines abstraction through a bottom-up analysis based on the results of game rollouts, enabling the quantification of differences between hands. Existing outcome-based hand abstraction methods, such as \textbf{expectation hand strength (EHS)}\citep{gilpin2007better} and \textbf{potential-aware abstraction algorithms (PAAs)}\citep{gilpin2007potential}, have significantly advanced practical abstraction techniques and remain widely used in constructing high-level AI. However, these methods rely on indirect evaluation of abstraction quality, measuring performance through strategies derived from game-solving solutions. They lack guidance from formal mathematical models, such as games with ordered signals, to design more effective and principled algorithms.

The main contributions of this paper are as follows:
\begin{itemize}[label=\textbullet, leftmargin=*]
    \item We propose the \textbf{signal observation ordered game (SOOG)} model to represent hold'em games, enabling the use of signal observation abstraction to model the hand abstraction task (Section~\ref{sec:mathematical-model}).
    \item We introduce a novel metric—\textbf{resolution bound}—to evaluate signal observation abstraction algorithms, providing a systematic framework for their design and evaluation (Section~\ref{sec:resolution-bound}).
    \item Using resolution bound, we identify deficiencies in the resolution bound of current state-of-the-art (SOTA) algorithms, revealing that enforced forgetting of historical information leads to significant information loss (Section~\ref{sec:PAAs-and-excessive-abstraction}).
    \item To address these deficiencies, we propose the \textbf{KrwEmd algorithm}, which achieves a higher resolution bound by effectively leveraging historical information (Section~\ref{sec:k-recall-methods}).
\end{itemize}

We conduct extensive experiments in the Numeral211 hold'em environment, demonstrating the superiority of the proposed KrwEmd algorithm over existing SOTA algorithms in terms of resolution bound and overall performance (Sections~\ref{sec:experimental-setup} and~\ref{sec:experiment}). To provide a comprehensive understanding of the study's context, we introduce the rules and structure of hold'em games in Section~\ref{sec:holdem}.

\section{Hold'em Games}\label{sec:holdem}

Hold'em games are a class of poker games characterized by the use of private (hole) cards combined with community (board) cards to construct the best possible n-card poker hand. Players make decisions across multiple betting phases, where each player, based on their current hand and observations of other players’ actions (which may reflect the strength of their hole cards), responds with actions such as betting, calling, raising, or folding. The outcome is determined either when only one player remains or by a final showdown where players reveal their private cards, and the player with the best hand wins the pot.


Hold’em has several well-known variants, each differing in betting structure, hand card count, and gameplay complexity. Below, we highlight key rules of HUNL and HULH, followed by the introduction of Numeral211 hold'em, a specialized environment proposed to facilitate hand abstraction research.

\paragraph{HULH and HUNL} 
HULH and HUNL are two of the most prominent benchmark environments for AI research in IIGs. Both games feature two players who receive two private hole cards and compete using five community cards revealed in four betting phases: Preflop, Flop, Turn, and River. Players make decisions at each round, with possible actions including betting, calling, raising, checking, and folding. The key difference lies in the betting structure: in HULH, bets are of fixed size at each phase, whereas in HUNL, players can bet any amount up to their entire stack. As a result, HUNL has a larger action space, leading to greater strategy complexity, while HULH’s fixed bets allow for more tractable analysis and solution methods.

\paragraph{Numeral211 Hold'em} 
Numeral211 hold'em is a novel environment proposed in this paper specifically for hand abstraction research. Its design compresses the rules of hold’em games into a smaller framework that retains the essential characteristics of hole cards while ensuring that strategies can be evaluated on standard computational hardware. This enables theoretical performance assessments while maintaining key strategic elements of larger games like HULH.

Numeral211 hold'em is a two-player game where each player antes 5 chips into the pot before the start of a game. The game is played with a reduced deck of 40 cards, comprising four suits (spades, hearts, clubs, diamonds). Each suit contains ten numeric cards: 2 through 9, plus the Ace. The game proceeds through three betting phases. The first phase deals two private cards to each player, and the next two phases reveal one community card each. The player who acts first in phase 1 is player 1, while in the second and third phases, player 2 acts first. During each phase, the first player to act may either bet a fixed number of chips (10 in phase 1 and 20 in phases 2 and 3) to start a betting phase (bet) or pass this right to the opponent (check). Once a betting phase begins, players take turns to raise (10 in phase 1 and 20 in phases 2 and 3, with a combined maximum of 3 raises per phase for both players), call, or fold. If both players check or one player calls the other's action, the game progresses to the next phase or concludes. Folding means forfeiting the chance to win the hand and awarding the entire pot to the opponent. Only if the game reaches the final phase and does not end in a fold will a showdown occur, where the players reveal their hole cards and the pot is awarded according to the relative strength of their hands, specifically the strongest three-card combination.
Only if the game reaches the final phase and does not end in a fold will a showdown occur, where the players reveal their hole cards, and the pot is awarded based on the relative strength of their hands, specifically the strongest three-card combination, as summarized by the rank of hands illustrated in Figure~\ref{fig:handranks}.

\begin{figure}[h]%
\centering
\includegraphics[width=0.5\textwidth]{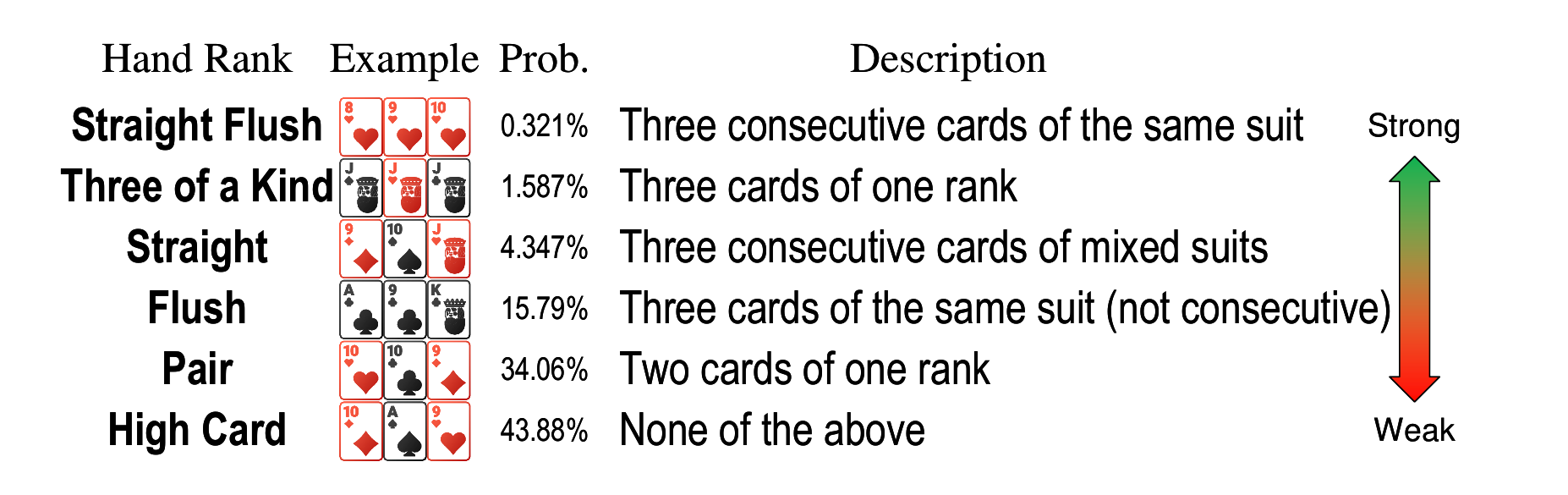}
\caption{The hand ranks of Numeral211 hold'em.}\label{fig:handranks}
\end{figure}

\section{Mathematical Modeling}\label{sec:mathematical-model}

In this section, we introduce the foundational concepts necessary for signal observation in Hold'em games. We begin by defining \textbf{signal observation ordered games}, a specialized game framework that captures the structure and dynamics of Hold'em games (Section \ref{subsec:signal-ordered-games}). Building on this framework, we establish the concept of \textbf{signal observation abstraction}, which models the hand abstraction task (Section \ref{subsec:signal-observation-abstraction}).

\subsection{Signal Observation Ordered Games} \label{subsec:signal-ordered-games}

\begin{figure*}[t]
\centering 
\begin{align}
&\pi_i(\sigma, (\tilde{v}, \theta))= 
\begin{cases}
    1 & \text{if }(\tilde{v}, \theta)=(\tilde{v}_0, \theta_0), \\
    \pi_i(\sigma, (Pa_{\tilde{V}}(\tilde{v}), Pa_{\Theta}(\theta))) & \text{if }Pa_{\tilde{V}}(\tilde{v}) \in \tilde{H}^+ \backslash \tilde{H}_i, \\ 
    \sigma(Pa_{\tilde{V}}(\tilde{v}), \vartheta_i(Pa_{\Theta}(\theta)), Pa_{\tilde{A}}(\tilde{v})) \pi_i(\sigma, (Pa_{\tilde{V}}(\tilde{v}), Pa_{\Theta}(\theta))) & \text{if }Pa_{\tilde{V}}(\tilde{v}) \in \tilde{H}_i.
\end{cases} \label{eq:reach-i}
\end{align} 
\end{figure*}

\begin{definition}
The structured  tuple $\tilde{\mathcal{G}} = \left\langle \tilde{\mathcal{T}}, \tilde{N}, \rho, \tilde{A}, Pa_{\tilde{A}}, \gamma, \varTheta, \varsigma, \vartheta, \omega, \succeq, u \right\rangle$ formally defines a signal observation ordered game (SOOG), where:
\begin{itemize}[left=0cm]
    \item $\tilde{\mathcal{T}} = \left\langle \tilde{V}, \tilde{v}^0, \tilde{Z}, Pa_{\tilde{V}} \right\rangle$ is a public tree consisting of a finite set of public nodes $\tilde{V}$, a unique initial node $\tilde{v}^0 \in \tilde{V}$, a finite set of terminal public nodes $\tilde{Z} \subseteq \tilde{V}$, and a predecessor function $Pa_{\tilde{V}}: \tilde{V}^+ \to \tilde{H}$, mapping each non-initial node $\tilde{v}^+ \in \tilde{V}^+$ to its immediate predecessor $\tilde{h} \in \tilde{H}$. Here, $\tilde{V}^+ = \tilde{V} \backslash \{\tilde{v}^0\}$ is the set of non-initial nodes, while $\tilde{H} = \tilde{V} \backslash \tilde{Z}$ is the set of internal nodes.
    
    \item $\tilde{N} = N \cup \{sp\}$ is a finite set of augmented players, where $sp$ refers to a spectator who observes the public information of the game without influencing its progression. $N = \{0, 1, \dots, n\}$ denotes the set of players, with $0$ representing a special player, commonly referred to as \textbf{chance} or \textbf{nature}, whose actions correspond to random events. The set of rational players (i.e., non-chance players) is denoted by $N_+ = N \backslash \{0\}$, and the set of augmented rational players is given by $\tilde{N}_+ = N_+ \cup \{sp\}$. The player function $\tilde{\rho}: \tilde{H} \to N$ partitions the set $\tilde{H}$ among players. The decision public node set is defined as $\tilde{H}_+ = \bigcup_{i \in N_+} \tilde{H}_i$, where $ \tilde{H}_i = \{\tilde{h} \in \tilde{H} \mid \rho(\tilde{h}) = i\}$.
    
    \item $\tilde{A} = A_+ \cup \tilde{A}_0$ is a finite set of actions. The set $A_+$ includes the actions available to rational players, while $\tilde{A}_0 = \{a_0\}$ denotes the set of actions available to the chance player. Notably, $\tilde{A}_0$ contains only one action, $a_0$, which represents a placeholder action in the public tree where the chance player reveals a signal. The function $Pa_{\tilde{A}}: \tilde{V}^+ \to \tilde{A}$ defines an action partition of $\tilde{V}^+$, mapping each non-initial public node $\tilde{v}^+$ to the action $a \in \tilde{A}$ that immediately leads to the occurrence of $\tilde{v}^+$. The function $\mathcal{A}(\tilde{h} \in \tilde{H}) = \{a^\prime \in \tilde{A} \mid [\exists \tilde{v}^+ \in \tilde{V}^+] (Pa_{\tilde{A}}(\tilde{v}^+) = a^\prime \wedge Pa_{\tilde{V}}(\tilde{v}^+) = \tilde{h})\}$ confines the available actions of each internal public node.
    
    \item $\gamma: \tilde{V} \to \mathbb{N}^+$ is a phase partition of $\tilde{V}$, assigning to each public node $\tilde{v}$ a value corresponding to the number of chance public nodes encountered along the path from the initial public node $\tilde{v}^0$ to and including $\tilde{v}$, thereby defining the phase of $\tilde{v}$. The maximum phase in the game is denoted by $\Gamma$, and notably, $\gamma(\tilde{v}^0) = 1$, indicating that $\tilde{v}^0$ is a chance public node.
    
    \item $\varTheta = \left\langle \Theta, \theta^0, Pa_{\Theta} \right\rangle$ is a signal tree with height $\Gamma$ ($\Gamma+1$ layers), consisting of a finite set of signals $\Theta$, a unique initial signal $\theta^0$, and a predecessor function $Pa_{\Theta}: \Theta^{+} \to \Theta \setminus \Theta^{(\Gamma)}$, mapping each non-initial signal to its immediate predecessor. Here, $\Theta^{(r)}$ denotes the set of signals revealed in phase $r = 1, \dots, \Gamma$; specifically, $\Theta^{(0)} = \{\theta^0\}$. $\Theta^{+} = \Theta \backslash \Theta^{(0)}$ represents the set of non-initial signals. The depth of a signal $\theta \in \Theta$ is denoted by $d_\Theta(\theta)$, and all terminal signals in $\Theta$ (i.e., signals without successors) necessarily have a depth of $\Gamma$. 
    
    \item $\varsigma: \Omega \mapsto [0,1]$ is a chance probability function that assigns a probability of occurrence to each successive pair of signals, with $\Omega=\{(\theta, \theta^\prime) \in \Theta \times (\Theta \setminus \Theta^{(\Gamma)}) \mid Pa_{\Theta}(\theta^\prime) = \theta\}$. Additionally, for each $\theta \in \Theta \setminus \Theta^{(\Gamma)}$, the sum of the probabilities for all $\theta^\prime \in S(\theta)$ is equal to 1, where $S(\theta) = \{\theta^\prime \in \Theta^{+} \mid Pa_{\Theta}(\theta^\prime) = \theta\}$.
    
    \item $\vartheta = (\vartheta_1, \dots, \vartheta_n, \vartheta_{sp})$ is a tuple of observation functions, with $\vartheta_i: \Theta \mapsto \Psi_i$ mapping each $\theta \in \Theta$ to its corresponding signal observation infoset, such that signals within the same signal observation infoset $\psi \in \Psi_i$ cannot be distinguished by the augmented rational player $i \in \tilde{N}_+$. Furthermore, all elements $\psi \in \Psi_i$ collectively form a partition of $\Theta$.
    
    \item $\omega = (\omega_1, \dots, \omega_n)$ is a tuple of survival functions, where $\omega_i(\tilde{v} \in \tilde{V}) := \1_{\text{player } i \text{ is still participating at } \tilde{v}}$.

    \item There is a total order $\succeq(\theta \in \Theta^{(\Gamma)}, i \in N_+, j \in N_+) := \allowbreak \1_{\text{player } i \text{ is ranked no lower than player } j \text{ at }\theta}$ order over the terminal signals with respect to the set of players $N_+$.
    
    \item Signals and public nodes constitute the nodes of an ordered game. The corresponding sets are defined as follows:
    \begin{itemize}[left=0cm]
        \item $H_0^{(r)} = \tilde{H}_0^{(r)} \times \Theta^{(r-1)}$, and for $j \in N_+$, $H_j^{(r)} = \tilde{H}_j^{(r)} \times \Theta^{(r)}$, $r = 1, \dots, \Gamma$, where $\tilde{H}_i^{(r)} = \{\tilde{h}\in \tilde{H}_i \mid \gamma(\tilde{h}) = r\}$, $i\in N$.
        \item $Z^{(r)} = \tilde{Z}^{(r)} \times \Theta^{(r)}$, where $\tilde{Z}^{(r)} = \{\tilde{z}\in \tilde{Z} \mid \gamma(\tilde{h}) = r\}$, $r = 1, \dots, \Gamma$.
        \item $H^{(r)} = \bigcup_{i\in N} H_i^{(r)}$ and $V^{(r)} = Z^{(r)} \cup H^{(r)}$.
        \item $H = \bigcup_{r = 1}^\Gamma H^{(r)}$, $Z = \bigcup_{r = 1}^\Gamma Z^{(r)}$, and $V = \bigcup_{r = 1}^\Gamma V^{(r)}$.
    \end{itemize}
    
    \item $u = (u_1, \dots, u_n)$ is a tuple of utility functions, where each $u_i: Z \mapsto \mathbb{R}$ represents the utility function specifically evaluating the payoff of rational player \( i \). In the final phase, for $z = (\tilde{z}, \theta) \in Z^{(\Gamma)}$, it is required that if $\omega_i(\tilde{z}) \omega_j(\tilde{z}) \succeq (\theta, i, j) = 1$, then $u_i(\tilde{z}, \theta) \geq u_j(\tilde{z}, \theta)$.
\end{itemize}
\end{definition}

Rational players make decisions based on their signal observation infosets and the current non-terminal public node. Signals within the same signal observation infoset necessarily share the same depth. Accordingly, the notation \(d_\Theta(\psi)\), originally defined for individual signals, is naturally extended to denote the depth of the entire signal observation infoset \(\psi \in \Psi_i\). A rational player has access to more information than the spectator, including all information available to the spectator. For any $i \in N_+$ and $\forall \theta \in \Theta$, we have $\vartheta_i(\theta) \subseteq \vartheta_{sp}(\theta)$.
SOOGs are a subset of IIGs, as they extend the structure of IIGs by refining nodes into two orthogonal components: public nodes and signals. Correspondingly, infosets in IIGs are decomposed into two orthogonal parts: public nodes and signal observation infosets. This decomposition allows us to isolate and study the signal observation infosets independently, serving as a prototype for hands in hold'em games. Furthermore, the strategic framework and Nash equilibrium concepts from IIGs can be naturally applied to SOOGs within this refined structure.

A rational player $i \in N_+$ chooses a strategy $\sigma_i: Q_i \mapsto [0,1]$ from $\Sigma_i$, the set of available strategies for player $i$. Here, $Q_i = \{(\tilde{h}, \psi, a) \in \tilde{H}_i \times \Psi_i \times A_+ \mid \gamma(\tilde{h}) = d_\Theta(\psi) \wedge a \in \mathcal{A}(\tilde{h})\}$, and the condition $\sum_{a \in \mathcal{A}(\tilde{h})} \sigma_i(\tilde{h}, \psi, a) = 1$ must be satisfied. When all rational players select their strategies, a strategy profile $\sigma: Q \mapsto [0,1]$ is formed, where $\sigma = \oplus_{i\in N_+} \sigma_i \in \Sigma$ \footnote{Given the functions$f_1: A_1 \mapsto B_1$ and $f_2: A_2 \mapsto B_2$, a new function $f = f_1\oplus f_2$ is defined such that $f: A_1 \oplus A_2 \mapsto B_1 \cup B_2 $, with $$f(x) = \begin{cases} f_1(x) & \text{if }x\in A_1\setminus A_2, \\ f_2(x) & \text{if }x \in A_2 \setminus A_1.\end{cases}$$} and $Q = \bigcup_{i\in N_+} Q_i$, and the probability of reaching each node $v = (\tilde{v}, \theta) \in V$ can be recursively computed using equations \ref{eq:reach-i}, \ref{eq:reach-theta} and \ref{eq:reach}, where $\tilde{H}^+ = \tilde{V}^+ \backslash \tilde{Z}$.
\begin{align}
\pi&_\Theta(\theta) = \nonumber \\
&\begin{cases}
    1 & \text{if }\theta=\theta_0, \\
    \varsigma(Pa_\Theta(\theta), \theta)\pi_\Theta(Pa_\Theta(\theta)) & \text{if }\theta \in \Theta^+.
\end{cases} \label{eq:reach-theta} \\
\pi&(\sigma, (\tilde{v}, \theta)) = \pi_\Theta(\theta)\prod_{i\in N_+}\pi_i(\sigma, (\tilde{v}, \theta)) \label{eq:reach}.
\end{align}

The expected payoff for rational player $i \in N_+$, given a strategy profile $\sigma \in \Sigma$, is $\hat{u}_i(\sigma) = \sum_{z \in Z} \pi(\sigma, z) u_i(z)$.
A strategy profile $\sigma^* \in \Sigma$ is a \textbf{Nash equilibrium} if, for all $i \in N_+$, the following holds:
$$
\hat{u}_i(\sigma^*) \geq \max_{\sigma_i^\prime \in \Sigma_i} \hat{u}_i(\sigma^*_{-i} \oplus \sigma_i^\prime).
$$

\subsection{Signal Observation Abstraction}\label{subsec:signal-observation-abstraction}

Applying SOOGs to HUNL, the set of all cards dealt from the beginning of a game up to the current state constitutes a signal. For instance, in a game reaching the Turn phase, player 1 and player 2 are dealt \texttt{A$\heartsuit$K$\heartsuit$} and \texttt{J$\heartsuit$Q$\heartsuit$}, respectively, during the Preflop. The community cards revealed during the Preflop and Turn are \texttt{1$\heartsuit$2$\heartsuit$3$\heartsuit$} and \texttt{4$\heartsuit$}, respectively. The current signal can thus be represented as $\theta=$ \texttt{[(A$\heartsuit$K$\heartsuit$),(J$\heartsuit$Q$\heartsuit$)$\mid$ 1$\heartsuit$2$\heartsuit$3$\heartsuit \mid$ 4$\heartsuit$]}. The observation of $\theta$ by the spectator player ($sp$) is \texttt{[$\cdot \mid$ 1$\heartsuit$2$\heartsuit$3$\heartsuit \mid$ 4$\heartsuit$]}, reflecting the information visible from the audience perspective. More importantly, the observations of $\theta$ by player 1 and player 2 are \texttt{[A$\heartsuit$K$\heartsuit\mid$ 1$\heartsuit$2$\heartsuit$3$\heartsuit \mid$ 4$\heartsuit$]} and \texttt{[J$\heartsuit$Q$\heartsuit\mid$ 1$\heartsuit$2$\heartsuit$3$\heartsuit \mid$ 4$\heartsuit$]}, respectively, which correspond to the hands of player 1 and player 2. This transformation further enables the task of grouping hands in hold'em games, known as hand abstraction, to be reframed as a signal observation abstraction model within the SOOGs framework.

\begin{definition}
In SOOGs, $\alpha = (\alpha_1, \dots, \alpha_n)$ is a signal observation abstraction profile, and $\Psi^{\alpha}_i$ denotes the set of abstracted signal observation infosets for $i \in N_+$ and $\alpha$. Each $\alpha_i: \Theta \mapsto \Psi^{\alpha}_i$ maps a signal $\theta \in \Theta$ to an abstracted signal observation infoset $\hat{\psi} \in \Psi^{\alpha}_i$. Furthermore, each $\hat{\psi}$ can be further partitioned into finer signal observation infosets within $\Psi_i$.
\end{definition}

A (signal observation) abstracted game $\tilde{\mathcal{G}}^\alpha$ is derived by substituting $\vartheta_i$ with $\alpha_i$ in $\tilde{\mathcal{G}}$. However, signal observation abstraction does not directly alter the game itself but instead modifies the structure of the players' strategy spaces. Specifically, the original strategy space $Q_i = \{(\tilde{h}, \psi, a) \in \tilde{H}_i \times \Psi_i \times A_+ \mid \gamma(\tilde{h}) = d_\Theta(\psi) \wedge a \in \mathcal{A}(\tilde{h})\}$ is transformed into (signal observation) abstracted strategy space $Q_i^\alpha = \{(\tilde{h}, \hat{\psi}, a) \in \tilde{H}_i \times \Psi_i^\alpha \times A_+ \mid \gamma(\tilde{h}) = d_\Theta(\hat{\psi}) \wedge a \in \mathcal{A}(\tilde{h})\}$. The abstracted strategy space reveals the essence of the hand abstraction task: it assumes a consistent partitioning of signals across different public nodes within the same phase. This consistency allows the task of abstracting each individual infoset in IIGs to be simplified into abstracting groups of infosets -- the Cartesian product of $\Psi_i$ and a public node -- in SOOGs. 

The concepts of perfect and imperfect recall, originating from IIGs, describe whether players remember all information encountered throughout the game. These notions can be naturally extended to the SOOGs framework. In a $\tilde{\mathcal{G}}$ of SOOG, a player $i \in \tilde{N}_+$ is said to have signal perfect recall if, for any two signals $\theta'_1, \theta'_2 \in \psi'$, every predecessor $\theta_1$ of $\theta'_1$ corresponds to a predecessor $\theta_2$ of $\theta'_2$ such that $\theta_2 \in \vartheta_i(\theta_1)$ and $\theta_1 \in \vartheta_i(\theta_2)$. When all players in the game $\tilde{\mathcal{G}}$ have signal perfect recall, the game itself is said to have signal perfect recall. If $\tilde{\mathcal{G}}$ has perfect recall, let $\alpha_i$ represent the signal observation abstraction for player $i \in N_+$. The abstraction profile $(\alpha_i, \vartheta_{-i})$ refers to a scenario in which player $i$ employs the signal observation abstraction $\alpha_i$, while the other players do not use any abstraction. If the game $\tilde{\mathcal{G}}^{(\alpha_i, \vartheta_{-i})}$ retains signal perfect recall, then $\alpha_i$ is considered a signal observation abstraction with \textbf{perfect recall}; otherwise, it is considered a signal observation abstraction with \textbf{imperfect recall}.

\section{Resolution Bound} \label{sec:resolution-bound}

In this section, we build on the signal observation abstraction model introduced in Section \ref{subsec:signal-observation-abstraction} to propose a novel metric: the \textbf{resolution bound}, which quantifies a signal observation abstraction algorithm's ability to retain information. Algorithms with higher information retention demonstrate stronger competitiveness under the same number of signal observation infosets and are capable of identifying a broader range of abstracted signal observation infosets. This makes the resolution bound an effective metric for evaluating and comparing the performance of signal observation abstraction algorithms. In this context, hand abstraction algorithms can be viewed as specific instances of signal observation abstraction algorithms. A signal observation abstraction algorithm is a procedure that determines a specific signal observation abstraction based on a given parameter set, and the resolution bound provides a unified framework for assessing its effectiveness.

We begin by defining the refinement relationship between signal observation abstractions. It is often challenging to directly compare the granularity of signal observation infosets identified by two abstractions, as one may offer finer granularity for certain signal observation infosets, while the other achieves greater precision for others. The refinement relationship describes a specific case where one signal observation abstraction algorithm is finer than another across all signal observation infosets. The formal definition is as follows:

\begin{definition}
In a SOOG, $\alpha_i$ and $\beta_i$ are signal observation abstractions for player $i$. The refinement relationship between $\alpha_i$ and $\beta_i$ is defined as follows: If, for any $\hat{\psi} \in \Psi_i^{\beta}$, there exist one or more abstracted signal observation infosets in $\Psi_i^{\alpha}$ such that their union forms a partition of $\hat{\psi}$, then $\alpha_i$ is said to refine $\beta_i$, denoted as $\alpha_i \sqsupseteq \beta_i$. Furthermore, $\alpha_i$ is a common refinement of multiple signal observation abstractions $\alpha_i^{1}, \ldots, \alpha_i^{m}$ for player $i$ such that $\alpha_i \sqsupseteq \alpha_i^{j}$ for all $j \in \{1, \ldots, m\}$.
\end{definition}

We can then extend the concept of a common refinement for a signal observation abstraction algorithm $Alg$. If there exists a signal observation abstraction $\alpha_i$ such that, for any set of parameters, the signal observation abstraction $\alpha_i'$ generated by $Alg$ can be refined by $\alpha_i$, then $\alpha_i$ serves as the common refinement of $Alg$ in the game. The common refinement of a signal observation abstraction algorithm defines the upper bound of its ability to distinguish signal observation infosets, which we also refer to as the resolution bound.

For any given signal observation algorithm, multiple resolution bounds can be identified. However, these bounds may deviate significantly from the algorithm's actual maximum capability to distinguish signal observation infosets. To strictly define an algorithm's capacity for information recognition, a tight resolution bound may need to be established. Nevertheless, determining such a tight bound is challenging. While having a high-quality resolution bound for an algorithm does not necessarily imply that the algorithm itself is excellent, the presence of a low-quality resolution bound can reveal inherent deficiencies in the algorithm.

\section{Potential-Aware Abstraction Algorithms and Their Resolution Bound} \label{sec:PAAs-and-excessive-abstraction}

Potential-aware abstraction algorithms (PAAs) are a class of automated abstraction algorithms that consider the potential of hands to transition into future game states. The foundational concept was introduced by Gilpin et al. in 2007, who proposed a potential-aware automated abstraction algorithm (PAAA)~\citep{gilpin2007better}. This algorithm employs the k-means clustering method to classify hands based on the histogram distances between the probability distributions of current hands transitioning into next-round hands. A key feature of this approach is its ability to differentiate hands with identical winning probabilities but varying long-term potentials. In 2014, Ganzfried et al. refined this approach by replacing the $L_2$ distance with the earth mover's distance (EMD)~\citep{rubner2000earth}, resulting in the potential-aware abstraction algorithm with earth mover's distance (PAAEMD)~\citep{ganzfried2014potential}. PAAEMD has since been successfully applied in leading hold'em AI such as DeepStack, Libratus, and Pluribus, establishing itself as the current SOTA hand abstraction algorithm.

In this section, we use the signal observation abstraction model to remodel and construct a lower resolution bound for these algorithms. We highlight that PAAs suffer from significant information loss.

\subsection{Potential-Aware Abstraction Algorithms}

PAAs are designed specifically for two-player scenarios. Without loss of generality, we describe the algorithm from the perspective of a single player $i$ against his opponent.

For a SOOG with $\Gamma$ phases, both the algorithms first determine the number of clusters for each phase, denoted as $(m^{(1)}, m^{(2)}, \dots, m^{(\Gamma)})$, which guides the resolution granularity of abstraction at each phase.

Next, for the $\Gamma$-th phase, the strength of each signal observation $\psi \in \Psi_i^{(\Gamma)}$ is quantified by a value called equity, calculated as:
\begin{equation}
\text{equity}(\psi) = w(\psi) + \frac{1}{2}t(\psi), \label{eq:equity}
\end{equation}
where $w(\psi)$, $t(\psi)$, and $l(\psi) = 1-w(\psi) - t(\psi)$ represent the probabilities that the signals rank player $i$ higher than, tied with, and lower than his opponent over $\psi$, respectively. The \(k\)-means algorithm is then applied to \(\Psi_i^{(\Gamma)}\), dividing it into \(m^{(\Gamma)}\) clusters $c^{(\Gamma)}_1, c^{(\Gamma)}_2, \dots, c^{(\Gamma)}_{m^{(\Gamma)}}$ based on the equity squared differences.

For each earlier phase \( r = 1, \dots, \Gamma-1 \), the algorithm constructs a histogram of length \( m^{(r+1)} \), \( h(\psi) \), for each signal observation \(\psi \in \Psi_i^{(r)}\). The \( j \)-th element of the histogram represents the conditional probability that a signal in the observation infoset \(\psi\) transitions to a signal in phase \( r+1 \) whose observation belongs to cluster \( c^{(r)}_j \). This is computed using \eqref{eq:paa}. The \( k \)-means algorithm is then applied to \(\Psi_i^{(r)}\), clustering it into \( m^{(r)} \) groups \( c^{(r)}_1, c^{(r)}_2, \dots, c^{(r)}_{m^{(r)}} \) based on histogram differences. PAAA utilizes the \( L_2 \) distance, while PAAEMD employs EMD between histograms.

\begin{align}
h_j(\psi) &= P(c^{(r+1)}_j \mid \psi) \nonumber \\
&= \frac{\sum_{\theta \in \psi, \theta' \in c^{(r+1)}_j} \pi_{\Theta}(\theta) \varsigma(\theta, \theta')}{\sum_{\theta \in \psi} \pi_{\Theta}(\theta)}.  \label{eq:paa}
\end{align}



\subsection{Potential-Aware Outcome Isomorphism}

We emulate the process of constructing abstractions in PAAs and construct a resolution bound for them, termed potential-aware outcome isomorphism. The core idea of this signal observation abstraction is to adopt a greedy approach at each step, aiming to identify the maximum possible abstracted signal observation infosets. Subsequently, we rigorously prove that the potential-aware outcome isomorphism serves as the resolution bound of potential-aware abstraction algorithms.

Without loss of generality, we construct the signal observation abstraction from the perspective of player \( i \). First, for any signal observation infoset $\psi \in \Psi_i^{(r)}$ , we define a winrate outcome feature as
\begin{align}
&wo_i^{(r)}(\psi) \nonumber \\
= &\big( wo_i^{(r),0}(\psi), wo_i^{(r),1}(\psi), \ldots, wo_i^{(r),n}(\psi) \big), \label{eq:wof}
\end{align}
where
\begin{itemize}[left=0cm]
    \item \( wo_i^{(r),0}(\psi) \) denotes the probability that player \( i \) ranks lower than at least one other player in the terminal signals, after passing through \( \psi \).
    \item \( wo_i^{(r),l}(\psi) \), for \( l > 0 \), denotes the probability that player \( i \) ranks no lower than any other player and ranks higher than exactly \( l-1 \) other players in the terminal signals, after passing through \( \psi \).
\end{itemize}

Specifically, in the 2-player scenario, $l(\psi)$, $t(\psi)$ and $w(\psi)$ in potential-aware abstraction algorithms correspond to $wo_i^{(r),0}(\psi)$, $wo_i^{(r),1}(\psi)$ and $wo_i^{(r),2}(\psi)$, respectively.

Then, for each signal observation infoset \( \psi \in \Psi^{(r)}_i \), we construct a potential-aware outcome feature (POF) \( pf_i^{(r)}(\psi) \). Signal observation infosets with the same POF form an equivalence class. We define the potential-aware outcome isomorphism (POI) by assigning a unique identifier to each equivalence class $id(pf_i^{(r)}(\psi))$, ranging from 1 to \( \tilde{m}^{(r)} \), where \( \tilde{m}^{(r)} \) is the number of distinct POFs in phase \( r \). For terminal phase \( r = \Gamma \), let \( pf_i^{(\Gamma)}(\psi) = wo_i^{(\Gamma)}(\psi) \). For earlier phases \( r = 1, \dots, \Gamma-1 \), the POF is a $\tilde{m}^{(r+1)}$-length histogram, as in potential-aware abstraction algorithms. The \( j \)-th element of \( pf_i^{(r)}(\psi) \) represents the conditional probability
\begin{align}
pf_{i, j}^{(r)}(\psi) &= \frac{\sum_{\theta \in \psi, id(pf_i^{(r+1)}(\vartheta_i(\theta'))) = j} \pi_{\Theta}(\theta) \varsigma(\theta, \theta')}{\sum_{\theta \in \psi} \pi_{\Theta}(\theta)}.  \nonumber
\end{align}

\begin{theorem}
The POI serves as a resolution bound of algorithm PAAA and PAAEMD.
\end{theorem}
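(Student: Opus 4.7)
The plan is to prove the theorem by reverse induction on the phase index $r$, showing that for every parameter choice $(m^{(1)},\dots,m^{(\Gamma)})$, the POI equivalence relation at phase $r$ refines the partition of $\Psi_i^{(r)}$ produced by PAAA (respectively PAAEMD). Once this is established for every $r$, the POI is, by definition, a common refinement of every abstraction the algorithms can produce, hence a resolution bound. The logical core of the proof is the slogan: identical POFs force identical PAA features, which force identical cluster assignments.

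For the base case $r=\Gamma$, consider two terminal-phase signal observation infosets $\psi_1,\psi_2 \in \Psi_i^{(\Gamma)}$ with $pf_i^{(\Gamma)}(\psi_1)=pf_i^{(\Gamma)}(\psi_2)$. By construction $pf_i^{(\Gamma)}=wo_i^{(\Gamma)}$, so in the two-player setting $l(\psi_1)=l(\psi_2)$, $t(\psi_1)=t(\psi_2)$, and $w(\psi_1)=w(\psi_2)$, which yields identical equities via \eqref{eq:equity}. Under any fixed deterministic tie-breaking rule for $k$-means, points with identical one-dimensional features have identical distances to every candidate centroid at every iteration, so they are assigned to the same cluster. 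Thus $\psi_1$ and $\psi_2$ end up in the same $c^{(\Gamma)}_j$, as required.

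For the inductive step, assume the claim holds at phase $r+1$, so every PAA cluster $c^{(r+1)}_j$ is a disjoint union of POI equivalence classes. Take $\psi_1,\psi_2\in\Psi_i^{(r)}$ with $pf_i^{(r)}(\psi_1)=pf_i^{(r)}(\psi_2)$; by construction their POI histograms, indexed over POI identifiers $1,\dots,\tilde m^{(r+1)}$, coincide entry by entry. Using the induction hypothesis to rewrite the PAA histogram of \eqref{eq:paa}, each PAA bin $h_j(\psi)$ equals the sum of POI-histogram entries whose identifier lands inside $c^{(r+1)}_j$:
\begin{align}
h_j(\psi) = \sum_{k \,:\, id^{-1}(k)\subseteq c^{(r+1)}_j} pf_{i,k}^{(r)}(\psi). \nonumber
\end{align}
Because this aggregation depends only on the POI histogram, $\psi_1$ and $\psi_2$ produce identical PAA histograms. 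Identical histograms have the same pairwise $L_2$ and EMD distances to every other point and to every centroid, so the subsequent $k$-means step of PAAA or PAAEMD assigns them to the same cluster $c^{(r)}_j$. This completes the induction.

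The main obstacle is the $k$-means tie-breaking issue: strictly speaking, $k$-means is only deterministic once a tie-breaking convention (for initial seeds, reassignments, and equidistant centers) is fixed, and the theorem must be read with the understanding that the same convention is applied to all points; otherwise two infosets with identical features could, in principle, be assigned differently. I would address this by noting once at the start of the proof that any reasonable implementation assigns identical feature vectors to the same cluster, so the argument goes through. A secondary subtlety is bookkeeping in the inductive step: one must verify that the reach probabilities $\pi_\Theta(\theta)\varsigma(\theta,\theta')$ appearing in \eqref{eq:paa} factor through the POI identifier at phase $r+1$, which follows because POI classes partition the successor signals and the normalizing denominator $\sum_{\theta\in\psi}\pi_\Theta(\theta)$ is shared across all bins.
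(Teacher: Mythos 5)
Your proof is correct and follows essentially the same route as the paper's: reverse induction on the phase, with the base case reducing equal terminal POFs to equal equities via \eqref{eq:equity}, and the inductive step rewriting each PAA histogram bin as a sum of POI-histogram entries over the POI classes contained in the corresponding cluster, so that identical POFs force identical $k$-means features and hence identical cluster assignments. Your explicit remarks on $k$-means tie-breaking and on the reach-probability bookkeeping are reasonable refinements of details the paper leaves implicit, but they do not change the argument.
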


\begin{proof}
To prove that the POI serves as a resolution bound for the PAAs, we begin by analyzing the iterative structure of the \(k\)-means clustering algorithm used in both algorithms. This is an iterative process, where each iteration $t$ consists of two primary steps.

\paragraph{Cluster Assignment Step:} For each signal observation infoset \(\psi \in \Psi_i^{(r)}\), assign it to the cluster \(c_{j,t}^{(r)}\) whose centroid \(\mu_{j,t}^{(r)}\) minimizes the chosen distance metric, formally defined as:

\begin{equation}
    c_{j,t}^{(r)} \ni \psi  \quad \text{if} \quad j = \arg \min_{k} D(\nu(\psi), \mu_{j,t}^{(r)}). \nonumber
\end{equation} 

In the terminal phase of both PAAA and PAAEMD, \(D\) is the squared difference operator, and \(\nu(\psi) = \text{equity}(\psi)\). In the non-terminal phase, \(\nu(\psi) = h(\psi)\), while \(D\) is the \(L_2\) distance operator for PAAA and the EMD operator for PAAEMD.

\paragraph{Centroid Update Step:} After the cluster assignments, update the centroid \(\mu_{j,t+1}^{(r)}\) of each cluster to minimize the total distance between the data points and their cluster centroid:

\begin{equation}
    \mu_{j,t+1}^{(r)} = \arg \min_{\mu} \sum_{\psi \in c_{j,t}^{(r)}} D(\nu(\psi), \mu). \nonumber
\end{equation}

It can be observed that for \(\psi, \psi' \in \Psi_i^{(r)}\), if at every iteration \(t\) we have \(\nu(\psi) = \nu(\psi')\), then \(\psi\) and \(\psi'\) will always be assigned to the same cluster in iteration \(t\). Therefore, to prove the conclusion, it suffices to demonstrate that for signal observation infosets \(\psi, \psi' \in \Psi_i^{(r)}\) with the same POF, they will always satisfy \(\nu(\psi) = \nu(\psi')\). We use mathematical induction to prove the statement. 

In the terminal phase, \(pf_i^{(\Gamma)}(\psi) = pf_i^{(\Gamma)}(\psi')\) implies \(w(\psi) = w(\psi')\) and \(t(\psi) = t(\psi')\). By \eqref{eq:equity}, this ensures \(\text{equity}(\psi) = \text{equity}(\psi')\), proving the base case.

For the inductive step, assume that for any \(\psi^1, \psi^2 \in \Psi_i^{(r)}\), if \(pf_{i}^{(r)}(\psi^1) = pf_{i}^{(r)}(\psi^2)\), then \(\nu(\psi^1) = \nu(\psi^2)\) holds. We aim to prove that for any \(\psi, \psi' \in \Psi_i^{(r-1)}\), if \(pf_{i}^{(r)}(\psi) = pf_{i}^{(r)}(\psi')\), then \(\nu(\psi) = \nu(\psi')\) also holds. By the inductive hypothesis, signal observation infosets with the same POF in phase $r$ are assigned to the same cluster. Let the POI indices in cluster \(c^{(r)}_j\) form the set \(pi^{(r)}_j\), then:

\begin{align}
h_j(\psi) &= \frac{\sum_{\theta \in \psi, \theta' \in c^{(r)}_j} \pi_{\Theta}(\theta) \varsigma(\theta, \theta')}{\sum_{\theta \in \psi} \pi_{\Theta}(\theta)} \nonumber \\
&= \sum_{k\in pi^{(r)}_j}\frac{\sum_{\theta \in \psi, id(pf_i^{(r)}(\vartheta_i(\theta'))) = k} \pi_{\Theta}(\theta) \varsigma(\theta, \theta')}{\sum_{\theta \in \psi} \pi_{\Theta}(\theta)} \nonumber \\
&= \sum_{k\in pi^{(r)}_j} pf_{i,j}^{(r)}(\psi) = \sum_{k\in pi^{(r)}_j} pf_{i,j}^{(r)}(\psi') \nonumber \\
&= \sum_{k\in pi^{(r)}_j}\frac{\sum_{\theta \in \psi', id(pf_i^{(r)}(\vartheta_i(\theta'))) = k} \pi_{\Theta}(\theta) \varsigma(\theta, \theta')}{\sum_{\theta \in \psi'} \pi_{\Theta}(\theta)} \nonumber \\
&= \frac{\sum_{\theta \in \psi', \theta' \in c^{(r)}_j} \pi_{\Theta}(\theta) \varsigma(\theta, \theta')}{\sum_{\theta \in \psi'} \pi_{\Theta}(\theta)} = h_j(\psi'). \nonumber
\end{align}

Thus, \(h_j(\psi) = h_j(\psi')\) for phase \(r-1\), and by extension, \(\nu(\psi) = \nu(\psi')\) in the non-terminal phase, completing the proof.
\end{proof}

\subsection{The Issue of Excessive Abstraction in POI} \label{sec:excessive-abstraction}

PAAs fall under the category of \textbf{outcome-based abstraction}. This methodology determines abstraction through a bottom-up analysis based on the results of game rollouts. In contrast, the lossless isomorphism algorithm (LI)~\citep{gilpin2007lossless} operates outside the scope of outcome-based abstraction, as it identifies signal similarities by analyzing the structure of signals themselves rather than their outcomes. We have proven that POI is the resolution bound for PAAs and, moreover, we can extend this analysis to demonstrate that POI serves as the resolution bound for other existing outcome-based abstraction algorithms, such as the EHS algorithm. We omit this proof in this paper for brevity. In this subsection, we highlight POI as an abstraction with relatively low resolution to illustrate the pressing need for improvement in current outcome-based abstraction algorithms.

Table \ref{tbl:defect} presents the number of distinct signal observation infosets at each phase of HUNL and HULH, alongside the number of distinct abstracted signal observation infosets identified by both the LI and POI algorithms at each phase. As the game progresses, the number of distinct signal observation infosets gradually increases, forming a triangular pattern. The LI algorithm effectively captures this pattern while abstracting the games. In contrast, POI fails to reflect this trend, with its abstracted signal observation infosets exhibiting a spindle-shaped distribution—fewer in the early and late phases of the game, and more in the middle phases. This indicates that POI suffers from severe information loss during the later phases of the game, an issue we refer to as excessive abstraction.

\begin{table}[bt]
\centering
\resizebox{\columnwidth}{!}{
\begin{tabular}{cccc}
\specialrule{1.2pt}{0pt}{0pt}
\hline
Phase   & Null Abstraction                  & LI         & POI \\ \hline
Preflop & $\binom{52}{2}$=1326              & 169        & 169                                               \\
Flop    & $\binom{52}{2,3}$=25989600        & 1286792    & 1137132                                                 \\
Turn    & $\binom{52}{2,3,1}$=1195521600    & 55190538   & 2337912                                                \\
River   & $\binom{52}{2,3,1,1}$=5379847200  & 2428287420 & 20687                                                          \\ \hline
\bottomrule[1.2pt]
\end{tabular}
}
\caption{Quantity of signal observation infosets in unabstracted game and abstracted signal observation infosets for various algorithms in HUNL and HULH.}
\label{tbl:defect}
\end{table}

Let's explore the reasons behind this phenomenon. PAAs are considered imperfect-recall abstraction algorithms that allow players to relax the requirement for complete memory retention. However, their construction, including POIs that mimic this process, represents an extreme scenario where no historical information is retained, and decisions are based solely on future signals. This excessive application of imperfect recall, which disregards historical consistency entirely, is referred to as the future-consider-only approach. As shown in Figure \ref{fig:potential_outcome_isomorphism_information}, the scale of abstracted signal observation infosets is influenced by two main factors under the future-consider-only approach. First, the quantity of signal observation infosets increases as the game progresses, enabling the identification of more distinct isomorphism classes. Second, the volume of information within the potential outcome features plays a critical role. For example, in the River phase, potential outcome features account for all possible showdown outcomes considering the opponent's hands when two private cards are drawn from the remaining 45 cards, \(\binom{52-7}{2} = 990\). At the Preflop phase, these features include scenarios where, after removing one’s two private cards from the deck, all possible combinations of the opponent’s private cards and five community cards are considered, \(\binom{52-2}{2} \cdot \binom{50-2}{3} \cdot \binom{48-3}{1} \cdot \binom{45-1}{1} = 41,951,448,000\). Imagine if players retained all past information; in the River phase, they would also retain the memory from the Preflop phase, resulting in an information volume of at least \(41,951,448,000\). However, future-consider-only approaches, such as POI, reduce this to just \(990\). This drastic reduction in information sharply increases the likelihood of repetition in potential outcome features, thereby shrinking the scale of distinct abstracted signal observation infosets. Together, these factors create a spindle-shaped distribution in the quantity of distinct POI classes, with excessive abstraction becoming more pronounced in later phases.

\begin{figure}[bt]
\centering
\includegraphics[width=0.5\textwidth]{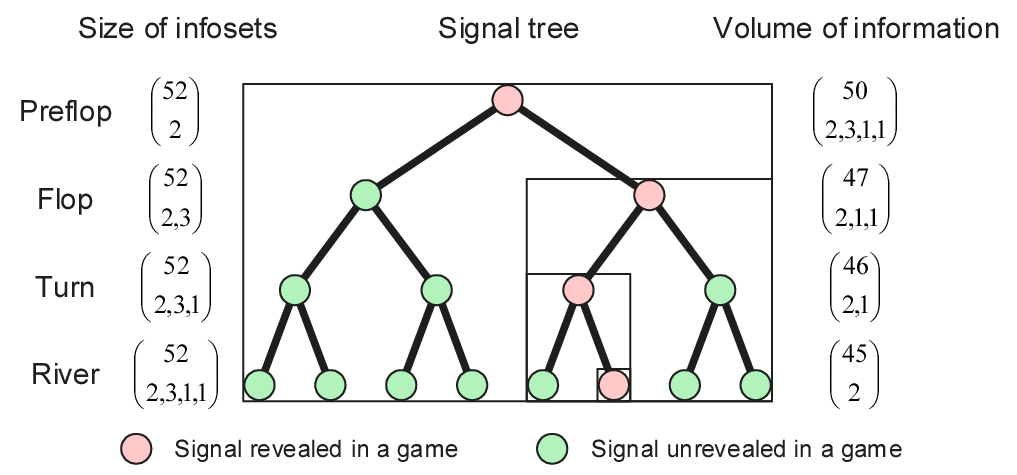}
\caption{Two factors influencing the quantity of distinct abstracted signal observation infosets for future-consider-only approach.}
\label{fig:potential_outcome_isomorphism_information}
\end{figure}

\begin{table*}[ht]
\centering
\resizebox{\textwidth}{!}{%
\begin{tabular}{ccccccccccc}
\specialrule{1.2pt}{0pt}{0pt}
\hline
         & Preflop & \multicolumn{2}{c}{Flop} & \multicolumn{3}{c}{Turn}     & \multicolumn{4}{c}{River}                \\
Phase    & 1 & \multicolumn{2}{c}{2} & \multicolumn{3}{c}{3}     & \multicolumn{4}{c}{4} \\
         \cmidrule(lr){2-2} \cmidrule(lr){3-4} \cmidrule(lr){5-7} \cmidrule(lr){8-11}
Recall   & 0       & 0           & 1           & 0       & 1        & 2        & 0     & 1        & 2         & 3         \\ 
KRWI     & 169     & 1028325     & 1123442     & 1850624 & 34845952 & 37659309 & 20687 & 33117469 & 529890863 & 577366243 \\ \hline \bottomrule[1.2pt]
\end{tabular}%
}
\caption{The number of signal observation infoset equivalence classes identified by KRWI in each phase and $k$ of HUNL and HULH.}
\label{tbl:krwil}
\end{table*}

\section{K-Recall Methods}\label{sec:k-recall-methods}

In Section \ref{sec:excessive-abstraction}, we analyzed the cause of excessive abstraction and identified that it stems from insufficient utilization of historical information. This leads to a straightforward intuition: better leveraging historical information can effectively address this issue. In this section, we construct a signal observation abstraction that incorporates historical information (Section \ref{sec:k-recall-winrate-isomorphism}) and then develop a signal observation abstraction algorithm based on it (Section \ref{sec:krwemd}).

\subsection{K-Recall Winrate Isomorphism} \label{sec:k-recall-winrate-isomorphism}

There are various ways to incorporate historical information. Here, we propose a specific approach: retaining information from the current phase and up to the preceding \(k\) phases. We use the term \(k\)-recall to denote this idea.

Next, we consider what type of information to retain. In PAAs, the histogram of signal observation's abstracted classification for the next phase is used as a feature. However, this information is challenging to compare: the \(L_2\) distance fails to capture differences between classifications, while the earth mover's distance is computationally expensive, difficult to weight appropriately, and its approximations may result in non-symmetric distances. Therefore, we choose a more easily comparable type of data—winrate. Specifically, for player \(i\) at phase \(r\), the signal observation infoset \(\psi \in \Psi_i^{(r)}\) has a \(k\)-recall winrate feature (KRWF)—where \(k < r\)—represented as a numerical array with a dimension of \((k+1)\cdot(n+1)\):
\begin{align}
&rf_i^{(r,k)}(\psi)  \nonumber \\
=&\big(wo_i^{(r)}(\psi); wo_i^{(r-1)}(\psi); \dots; wo_i^{(r-k)}(\psi)\big), \nonumber
\end{align}
where \(wo_i^{(r)}(\psi)\) is computed based on \eqref{eq:wof}. By adjusting \(k\), this approach controls the extent to which historical winrate data is considered. When \(k=0\), only the current phase's winrate data is used, while \(k=r-1\) incorporates winrate data from all previous phases. Further, we define a \(k\)-recall winrate isomorphism (KRWI) as an equivalence class where signal observation infosets with the same KRWF belong to the same KRWI class.

Table \ref{tbl:krwil} presents the number of distinct \(k\)-recall winrate isomorphism classes for different \(k\) values across phases in HUNL and HULH. It can be observed that for \(k=0\), the equivalence classes of KRWI are slightly fewer than those of POI, whereas for \(k=r-1\), it is significantly higher than POI. Moreover, the number of identified abstracted signal observation infosets increases progressively across phases, following a triangular pattern, indicating that the method avoids excessive abstraction. Note that \(k=r-1\) does not imply perfect recall but rather utilizes the winrate data from all preceding phases.

\subsection{KrwEmd Algorithm} \label{sec:krwemd}

In this section, we develop a signal observation abstraction algorithm based on KRWF, enabling flexible adjustment of the number of abstracted signal observation infosets. When constructing KRWF, we specifically considered the need to compare distances, which allows us to further classify KRWF using k-means. For the signal observation infosets $\psi$ and $\psi'$ of player $i$ at phase $r$, we can define the distance of their k-recall winrate feature as
\begin{align}
&D(rf_i^{(r,k)}(\psi), rf_i^{(r,k)}(\psi')) \nonumber \\
= &\sum_{j=0}^k w_j \cdot \text{Emd}(pf_i^{(r-j)}(\psi), pf_i^{(r-j)}(\psi')). \label{equa:distance-of-krwemd}
\end{align}

Among \eqref{equa:distance-of-krwemd}, \text{Emd} is the operator used to calculate the earth mover's distance. The earth mover's distance can be formulated as a linear programming problem. Given two distributions $\vp = (p_1, p_2, \dots, p_n)$ and $\vq = (q_1, q_2, \dots, q_m)$ over two sets of points, and a distane matrix $\mathbf{D} = [d_{ij}]_{n\times m}$ representing the ground distances between each point in $\vp$ and $\vq$, the goal is to find the optimal flow $\mathbf{F} = [f_{ij}]_{n\times m}$ that minimizes the total transportation cost $$\text{Emd}(\vp, \vq) = \min \sum_{i=1}^{n} \sum_{j=1}^{m} w_{ij} d_{ij}$$ subject to the following constraints:
\begin{align}
    &\displaystyle\sum_{j=1}^{m} f_{ij} = p_i, \quad \forall i = 1, 2, \dots, n \nonumber \\ 
    &\displaystyle\sum_{i=1}^{n} f_{ij} = q_j, \quad \forall j = 1, 2, \dots, m \nonumber \\ 
    &f_{ij} \geq 0, \quad \forall i, j \nonumber 
\end{align}
where $f_{ij}$ represents the amount of flow from $p_i$ to $q_j$. Since it requires solving linear programming equations, the computational complexity of the EMD is sensitive to the dimensionality of the histograms, and approximate algorithms are usually used for larger-scale problems. However, the dimensionality of winrate-based features is small, with a dimension of 3 in a two-player scenario, so we attempt to use a fast algorithm for accurately computing the EMD \citep{BPPH11}. $w_0, \dots, w_k$ are hyperparameters used to control the importance of EMD at each phase $r, \dots, r-k$, and the idea behind this design is to transform the similarity between two infoset equivalence classes into a linear combination of the EMD distances between their k-recall winrate features' winrates across different phases. We use the k-means++ algorithm \citep{arthur2007k} to further cluster the abstracted signal observation infoset equivalence classes of KRWI. We named this algorithm KrwEmd.

\section{Experimental Setup} \label{sec:experimental-setup}


\begin{table}[bt]
  \centering
  \resizebox{\columnwidth}{!}{%
\begin{tabular}{ccccccc}
\specialrule{1.2pt}{0pt}{0pt}
\hline
         & Phase 1 & \multicolumn{2}{c}{Phase 2} & \multicolumn{3}{c}{Phase 3}                   \\
Null Abstraction & 780     & \multicolumn{2}{c}{29640} & \multicolumn{3}{c}{1096680}       \\
LI       & 100     & \multicolumn{2}{c}{2260} & \multicolumn{3}{c}{62020}       \\
POI      & 100     & \multicolumn{2}{c}{2250} & \multicolumn{3}{c}{3957}       \\
         \cmidrule(lr){2-2} \cmidrule(lr){3-4} \cmidrule(lr){5-7} 
Recall   & 0       & 0           & 1           & 0       & 1        & 2               \\ 
KRWI     & 100     & 2234        & 2248     & 3957 & 51000 & 51070  \\ \hline   \bottomrule[1.2pt]
\end{tabular}%
}
\caption{Quantity of signal observation infosets in unabstracted game and abstracted signal observation infosets for various algorithms in Numeral211 hold'em.}
\label{tbl:krwi-kroi-numeral211}
\end{table}

We conducted experiments on the Numeral211 hold'em testbed. Table \ref{tbl:krwi-kroi-numeral211} shows the number of signal observation infosets in Numeral211 hold'em and the number of abstracted signal observation infosets identified by different algorithms.

Let $\alpha = (\alpha_1, \alpha_2)$ represent the signal observation abstraction to be assessed. Its strength is evaluated by measuring the \textbf{exploitability} of the approximate equilibrium derived using the CSMCCFR algorithm \citep{zinkevich2007regret, lanctot2009monte}, considering both symmetric and asymmetric abstraction scenarios.

In two-player games with ordered signals, exploitability measures the extent to which a player's strategy deviates from a Nash equilibrium. For a given strategy profile $\sigma = (\sigma_1, \sigma_2)$, the exploitability $\epsilon(\sigma)$ is computed as the difference between the game's expected total payoff at a Nash equilibrium $\sigma^*$ and the expected total payoff of the strategy being played against its best response. Formally, this is defined as
\begin{align}
\epsilon(\sigma) &= \frac{1}{2}(\max_{\sigma'_1 \in \Sigma_1}\hat{u}_1(\sigma'_1 \oplus \sigma_2)-\hat{u}_1(\sigma^*)) \nonumber \\
                 &+ \frac{1}{2}(\max_{\sigma'_2 \in \Sigma_2}\hat{u}_2(\sigma_1 \oplus \sigma'_2)-\hat{u}_2(\sigma^*)), \nonumber
\end{align}
which is measured in terms of milli blinds (antes) per game (mb/g) in Numeral211 hold'em.

In this \textbf{symmetric abstraction scenario}, commonly used in practical applications of abstraction, high-level AIs such as Libratus and DeepStack employ self-play to derive advanced strategies. We measure the exploitability of the approximate equilibrium yielded when both players in the game use signal observation abstraction. However, it may lead to the abstraction pathology \citep{waugh2009abstraction}. To avoid such problems, we illustrate the theoretical performance of the signal observation abstraction under evaluation through \textbf{asymmetric abstraction scenario}. In this scenario, the approximate equilibria of the signal-abstracted games $\tilde{\mathcal{G}}^{(\alpha_1, \vartheta_2)}$ and $\tilde{\mathcal{G}}^{(\vartheta_1, \alpha_2)}$ are computed, yielding strategies $\sigma^{*,1}$ and $\sigma^{*,2}$, respectively. These two strategies are then concatenated to form $\sigma' = (\sigma^{*,1}_1, \sigma^{*,2}_2)$, and the exploitability of $\sigma'$ is evaluated to assess the quality of the abstraction.


Regarding KrwEmd, we set the distance matrix:
\[
\mathbf{D} = \begin{bmatrix} 
0 & 1 & 2 \\ 
1 & 0 & 1 \\ 
2 & 1 & 0 
\end{bmatrix}
\]
For a two-player game, its meaning is quite clear. Taking the first row as an example: transitioning from a loss to a loss costs 0, transitioning to a draw costs 1, and transitioning to a win costs 2.

\section{Experiment} \label{sec:experiment}

\begin{figure}[h!]
    \centering
    \begin{subfigure}[t]{\linewidth}
        \centering
        \includegraphics[width=\linewidth]{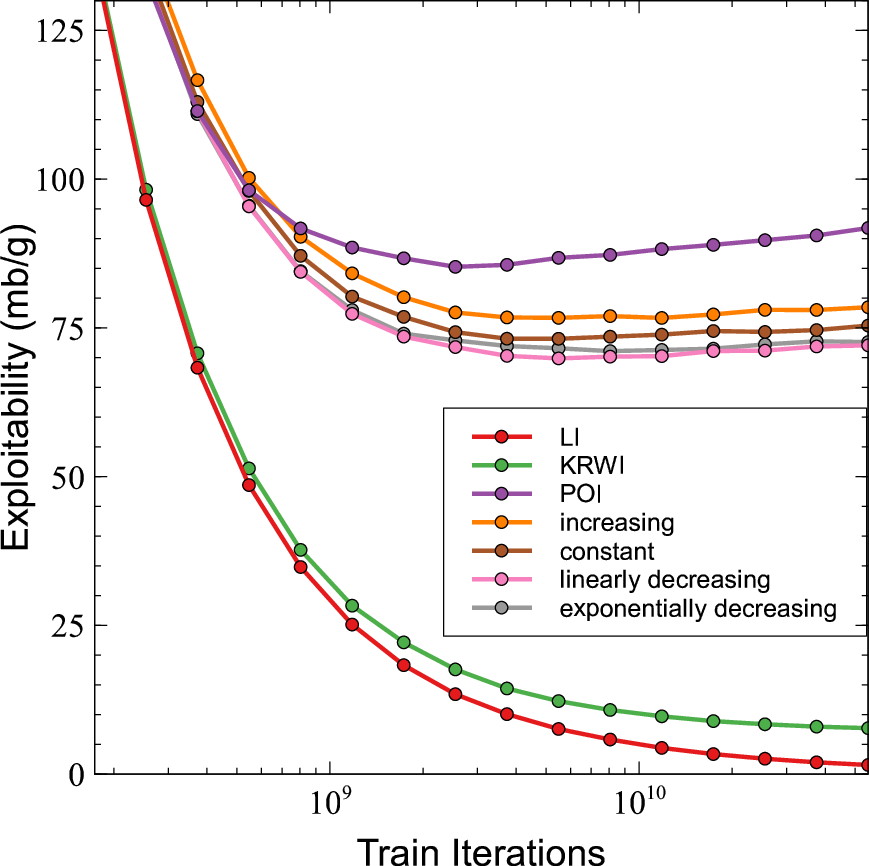}
        \caption{Symmetric abstraction scenario.}
        \label{fig:subfig1}
    \end{subfigure}
    
    \vspace{0cm}  

    \begin{subfigure}[t]{\linewidth}
        \centering
        \includegraphics[width=\linewidth]{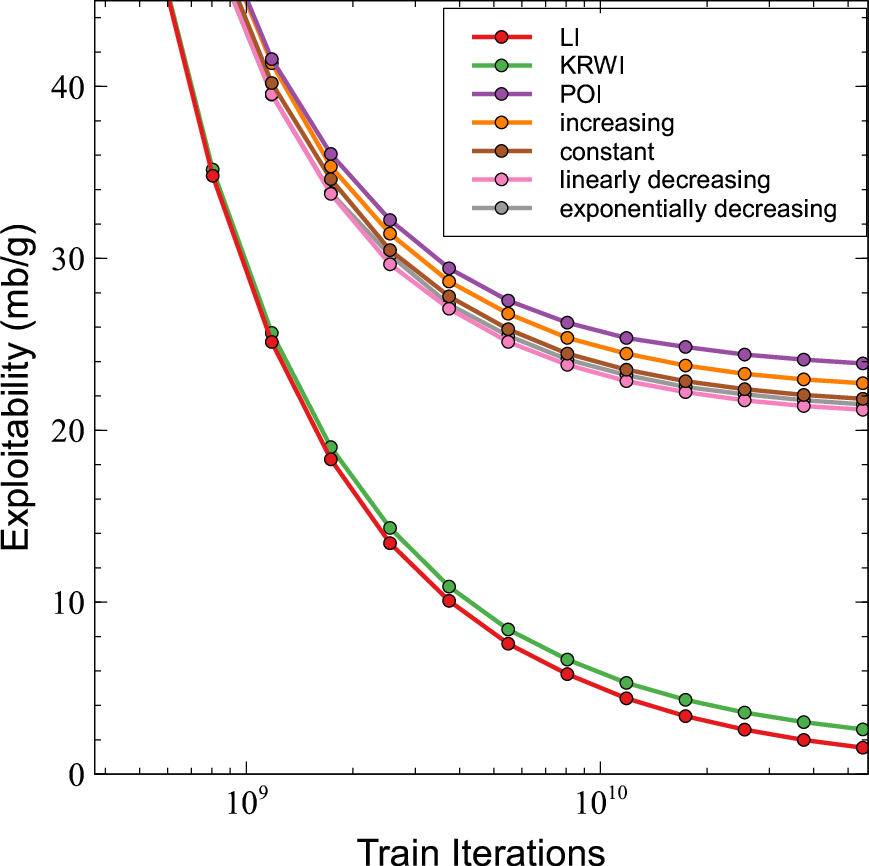}
        \caption{Asymmetric abstraction scenario.}
        \label{fig:subfig2}
    \end{subfigure}

    \caption{Performance comparison of KRWI, POI, and LI, along with ablation study.}
    \label{fig:isomorphism-compare}
\end{figure}


Firstly, we assess the performance of KRWI (with $k = r-1$ for phase $r$), POI, and LI. Note that POI represents the resolution bound of existing future-consider-only signal observation abstraction algorithms, while KRWI serves as the resolution bound for the KrwEmd algorithm. This experiment compares the performance of future-consider-only signal observation abstraction algorithms and KrwEmd at their respective performance ceilings. The experiment involved $5.5 \times 10^{10}$ iterations. Figure \ref{fig:isomorphism-compare} presents the results: (a) illustrates the outcomes under symmetric abstraction scenarios, and (b) shows those under asymmetric abstraction scenarios. Despite the overfitting observed in all algorithms except for LI and KRWI across these settings, the relative performance rankings of the algorithms remained consistent. Specifically, if one abstraction outperformed another in the symmetric setting, the same held true in the asymmetric setting. This consistent behavior across settings indicates the absence of abstraction pathology. Notably, KRWI outperformed POI and demonstrated performance closer to that of lossless abstraction. This highlights that algorithms incorporating historical information achieve a higher performance ceiling compared to those that disregard it. Surprisingly, KRWI's performance was remarkably close to that of LI, despite retaining only 3,957 abstracted signal observation infosets in the 3rd phase of Numeral211 hold'em environment, compared to LI's 62,020. This indicates that by leveraging historical win-rate information, KRWI effectively captures most factors influencing competitiveness, achieving a high cost-effectiveness in information utilization.

Although this experiment underscores the superiority of k-recall algorithms like KrwEmd over future-consider-only algorithms at the performance ceiling, the comparison is somewhat unfair due to the significantly larger scale of abstracted signal observation infosets identified by KRWI compared to POI. To address this, we conducted an ablation study to further validate the advantages of incorporating historical information. In the ablation experiment, we adjusted KrwEmd to match the number of abstracted signal observation infosets identified by POI for a fair comparison within the isomorphism framework. Notably, POI exhibit similar capabilities in recognizing signal infoset equivalence classes as KRWI ($k=0$ and $k=1$) in phases 1 and 2. as shown in Table \ref{tbl:krwi-kroi-numeral211}. Therefore, we directly used the equivalence classes identified by POI in phases 1 and 2, performing KrwEmd clustering ($k=2$) only in phase 3. We designed four sets of hyper-parameters $(w_0, w_1, w_2)$ in equation \eqref{equa:distance-of-krwemd} to adjust the importance of historical information, namely exponentially decreasing $(16, 4, 1)$, linearly decreasing $(7, 5, 3)$, constant $(1, 1, 1)$, and increasing $(3, 5, 7)$. However, excessively increasing the importance of later phases did not lead to sustained improvement (e.g., exponentially decreasing performed worse than linearly decreasing). In fact, if the weight of later phases approaches an extreme, the algorithm would eventually degrade into a future-consider-only algorithm.

\begin{figure}[h!]
    \centering
    \begin{subfigure}[t]{\linewidth}
        \centering
        \includegraphics[width=\linewidth]{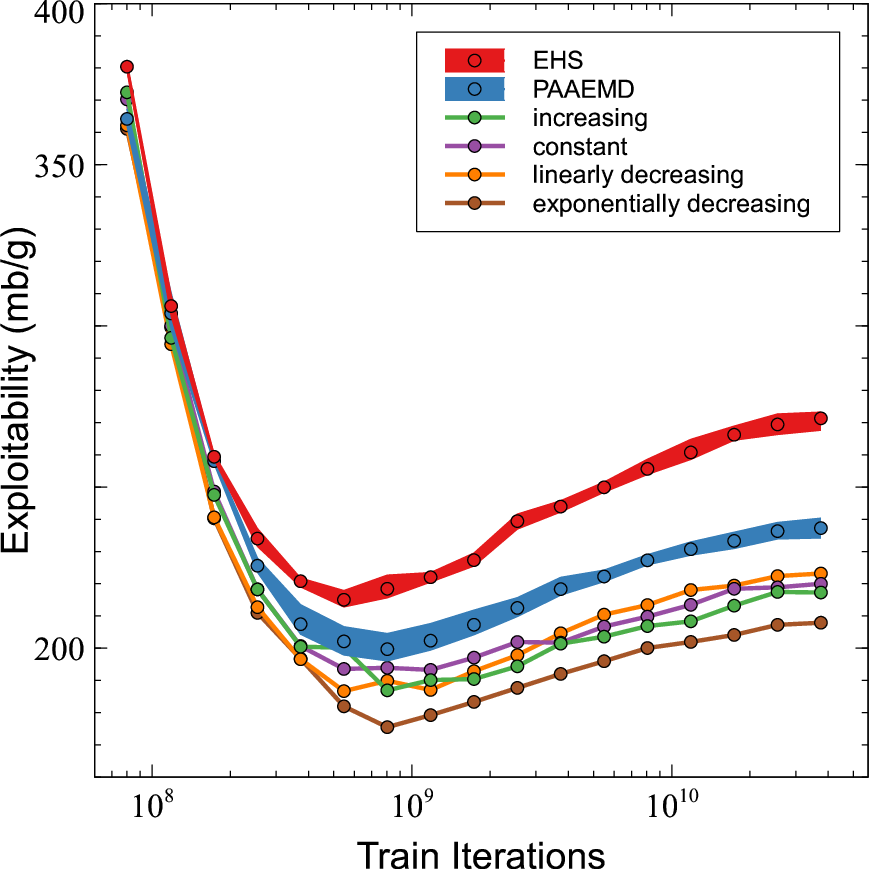}
        \caption{Symmetric abstraction scenario.}
        \label{fig:abstraction-symmetric}
    \end{subfigure}
    
    \vspace{0cm}  

    \begin{subfigure}[t]{\linewidth}
        \centering
        \includegraphics[width=\linewidth]{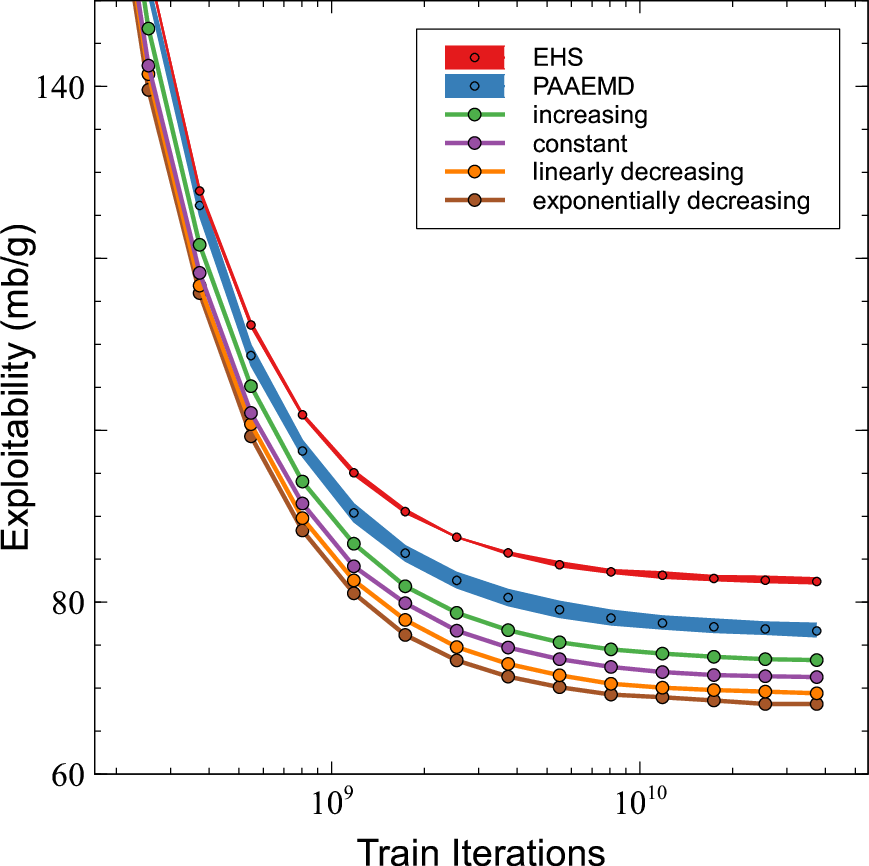}
        \caption{Asymmetric abstraction scenario.}
        \label{fig:abstraction-asymmetric}
    \end{subfigure}

    \caption{Performance comparison of PAAEMD, EHS, and KrwEmd with different parameters.}
    \label{fig:abstraction-compare}
\end{figure}

Next, we compared KrwEmd's performance with two widely used signal observation abstraction algorithms, EHS and PAAEMD. Notably, POI serves as the resolution bound for both EHS and PAAEMD, meaning the maximum number of signal observation infoset equivalence classes they can recognize does not exceed that of POI. For the experiment, we set a compression rate 10 times lower than that of POI and performed no abstraction for phase 1. The final numbers of abstracted signal observation infosets were set to 100, 225, and 396 for phases 1, 2, and 3, respectively. To eliminate the influence of randomness on performance, we generated three sets of abstractions for both EHS and PAAEMD. For KrwEmd, we used hyperparameters $(w_{3,0}, w_{3,1}, w_{3,2}; w_{2,0}, w_{2,1})$ in phases 3 and 2, with four configurations of historical importance: exponentially decreasing $(16, 4, 1; 4, 1)$, linearly decreasing $(7, 5, 3; 5, 3)$, constant $(1, 1, 1; 1, 1)$, and increasing $(3, 5, 7; 5, 7)$.

Figure \ref{fig:abstraction-compare} presents the results. Additionally, Figure \ref{fig:abstraction-symmetric} shows the outcomes for the symmetric abstraction scenario, while Figure \ref{fig:abstraction-asymmetric} illustrates the results for the asymmetric abstraction scenario. We observed that both symmetric and asymmetric abstractions maintained consistent performance, similar to the previous experiment, without significant abstraction pathologies. However, noticeable overfitting occurred in all abstraction algorithms under the symmetric abstraction scenario. The results demonstrate that KrwEmd significantly outperforms both EHS and PAAEMD across all parameter configurations. Moreover, we confirmed that the importance of historical information decreases progressively from the late game to the early game, although this time, the best-performing parameter configuration decreased exponentially rather than linearly.

By providing a fair comparison, these two experiments validate that considering historical information is indeed more effective than the future-consider-only approach in signal observation abstraction.

\section{Conclusion} \label{sec:conclusion}


This research focuses on the hand abstraction task in the construction of Hold'em AI. We specialize a subset of imperfect information games—signal observation sequential games—to model Hold'em games, enabling the use of signal observation abstraction models for hand abstraction tasks. Based on this model, we propose a reasonable metric for evaluating signal observation abstraction algorithms: the resolution bound. This research introduces the first imperfect recall signal observation abstraction algorithm that considers historical information.  We demonstrate the limitations of existing future-consider-only algorithms by constructing a low resolution bound and propose an improved approach, the KrwEmd algorithm, which incorporates historical information. Experiments show that KrwEmd outperforms all existing algorithms. Since hand abstraction is a relatively independent module in the construction of Hold'em AI, and previous state-of-the-art systems such as DeepStack, Pluribus, and Libratus relied on the future-consider-only algorithm PAAEMD, this research has the potential to significantly enhance the performance of high-level AI.

\bibliography{references}
\bibliographystyle{plain}

\end{document}